%
%
%
%
\documentclass[paper=letter, fontsize=12pt]{article}
\usepackage[english]{babel} 
\usepackage{amsmath,amsfonts,amsthm} 
\usepackage[utf8]{inputenc}
\usepackage{float}
\usepackage{lipsum} 
\usepackage{blindtext}
\usepackage{graphicx}
\usepackage{caption}
\usepackage{subcaption}
\usepackage[sc]{mathpazo} 
\usepackage[T1]{fontenc} 
\linespread{1.05} 
\usepackage{microtype} 
\usepackage[hmarginratio=1:1,top=32mm,columnsep=20pt]{geometry} 
\usepackage{multicol} 
\usepackage{booktabs} 
\usepackage{float} 
\usepackage{hyperref} 
\usepackage{bookmark}
\usepackage{lettrine} 
\usepackage{paralist} 
\usepackage{abstract} 
\usepackage{titlesec} 

\renewcommand\thesection{\Roman{section}} 
\renewcommand\thesubsection{\Roman{subsection}} 

\titleformat{\section}[block]{\large\scshape\centering}{\thesection.}{1em}{} 
\titleformat{\subsection}[block]{\large}{\thesubsection.}{1em}{} 
\usepackage{fancyhdr} 
\pagestyle{fancy} 
\fancyhead{} 
\fancyfoot{} 


\fancyhead[C]{Paper $\bullet$ December 10, 2018 (revised March 19, 2019)  } 

\newtheorem{corollary}{Corollary}
\title{\vspace{-15mm}\fontsize{24pt}{10pt}\selectfont\textbf{On expansions for the Black-Scholes prices and hedge parameters}} 
\author{
{Jean-Philippe Aguilar }\\[2mm]
{\it BRED Banque Populaire, Modeling Department, 18 quai de la Râpée, Paris - 75012}\\[2mm]
{jean-philippe.aguilar@bred.fr}
}

\date{}

\providecommand{\keywords}[1]{\textbf{\textit{Key words---}} #1}
\newcommand{\res}{\mathrm{Res}}
\newcommand{\id}{\mathrm{d}}

\newtheorem{proposition}{Proposition}
\newtheorem{theorem}{Theorem}

\usepackage{mathtools}

\begin{document}
\maketitle 
\pagestyle{headings}
\setcounter{page}{1}
\pagenumbering{arabic}
 \begin{abstract}
\noindent We derive new formulas for the price of the European call and put options in the Black-Scholes model, under the form of uniformly convergent series generalizing previously known approximations; these series are obtained by means of tools from multidimensional complex analysis. We also provide precise boundaries for the convergence speed and apply the results to the calculation of hedge parameters (Greeks).
\end{abstract}

\keywords{Option pricing, Black-Scholes formula, Hedge parameters, Risk sensitivities, Series expansion, Mellin transform, Multidimensional complex analysis}

\thispagestyle{fancy} 

\section{Introduction}

In capital markets, derivatives are financial instruments whose specificity is to be linked to another underlying instrument such as an asset, an index or an interest rate; they are traded by market participants as a way of transferring the risk on the latter instrument from a party to another \cite{IMF98}. Among most commonly traded derivatives are European call (resp. put) options, who give the right to the holder to buy (resp. sell) the underlying for an agreed amount, called strike price, at the expiration of the contract \cite{Willmott06,Hull08}. In order to evaluate the option itself, it is therefore necessary to model the dynamic of the underlying market; the most popular solution among practitioners is to use the Black-Scholes framework \cite{BS73}, where the log-returns of the underlying prices are assumed to be described by a geometric Brownian motion. Even if more sophisticated and realistic models have been introduced since then (for instance see \cite{KK16} and references therein), the Black-Scholes model remains the most commonly used by traders and financial engineers, because it admits an analytical solution, the Black-Scholes formula, which allows to express the European call price in terms of market parameters. 

Several ways exist to derive the Black-Scholes formula (eight different derivations are described in \cite{Andreasen98}, and even ten in the textbook \cite{Willmott09}), the most famous one being the resolution of the Black-Scholes partial differential equation (PDE). This PDE is classically obtained by a hedging argument combined with elementary stochastic calculus (which is the original derivation used by Black and Scholes in their seminal paper \cite{BS73}) but also as a limiting case of the binomial model \cite{Cox79} or by using the Capital Asset Pricing Model (CAPM) (see \cite{Andreasen98,Rouah} for instance). For non-PDE approaches, let us mention probabilistic derivations such as the martingale technique, which, although losing the direct connection with the hedging argument, provides a precise interpretation of the Black-Scholes formula in terms of risk-neutral probabilities \cite{Hull08,Nylsen}, or purely economic approaches like the "representative investor" as introduced by Rubinstein \cite{Rubinstein76}. 

In this paper, we would like to document a new derivation, based on a representation for the option price as a complex integral over a domain of $\mathbb{C}^2$ (obtained via suitable transforms of the Green function for the Black-Scholes PDE). This approach is fruitful because, evaluating this complex integral by means of multidimensional residues, we obtain series expansions for the price of the call and put options, which turn out to be simple and fast convergent; these expansions recover existing approximations, that were known in some very specific market configurations (when the asset is at-the-money forward, it is easy to expand the Black-Scholes formula as a power series of the market volatility \cite{BS94}). Moreover, they are unconditionally and uniformly convergent, which would not be the case with a naive Taylor expansion: as already noticed by Estrella in \cite{Estrella95}, the Black-Scholes formula mixes two components of strongly different natures  (the logarithmic function, possessing an expansion converging very fast but only for a small range of arguments, and the normal distribution, whose expansion  converges on the whole real axis but with fewer level of accuracy) resulting in situations where, for a plausible range of parameter values, the Taylor series for the Black-Scholes price diverges. 

The paper is organized as follow. In section 2, we first write down the option price as a double complex integral, and evaluate it by help of residue theory in $\mathbb{C}^2$. This will allow us to express the call option price under the form of a simple series \eqref{BS_series}, which will be refined into another series \eqref{BS_series_3} exhibiting more clearly its financial meaning; last for this section, we prove the exponential convergence to the option price. In section 3, we apply \eqref{BS_series_3} to the specific at-the-money configuration, and we compute the series expansions for the hedge parameters (known as "Greek letters"), which measure the option's sensitivity to the variations of market parameters. After the conclusive section and for reader's convenience, we have equipped the paper with an appendix summing up (without proof) the main concept of multidimensional complex analysis used in the present article.

\section{Pricing formulas}

Let us start by introducing some usual financial notations. In all of the following, $T$ will be some positive real number and $S$ will represent the market price of a financial instrument (more precisely, $\{ S_t \}_{t\in[0,T]}$ will be a stochastic process on a filtered probability space $(\Omega,\mathcal{F},\mathbb{P})$). $C$ will denote the price of the European call option on $S$, with maturity $T$ and strike $K\in\mathbb{R}$. As for multidimensional complex analysis notations, we will denote the vectors in $\mathbb{C}^2$ by $\underline{u}=\begin{bmatrix} u_1 \\ u_2 \end{bmatrix}$, $u_1,u_2\in\mathbb{C}$; we will also use the standard wedge symbol for differential forms, which has the properties \cite{Griffiths78}:
\begin{equation}
\id u_1 \, \wedge \, \id u_2 \, = \, - \, \id u_2 \, \wedge \, \id u_1 \,\, \hspace*{2cm} \,\, \id u_1 \, \wedge \, \id u_1 \, = \, 0
\end{equation}
and we will denote $\id \underline{u} := \id u_1 \wedge \id u_2$.

\subsection{The call price as a complex integral}

In the Black-Scholes model, it is assumed that, under a certain measure, the instantaneous variations of $S$ are driven by a geometric Brownian motion of drift $r\in\mathbb{R}$ (risk-free interest rate) and volatility $\sigma >0$ (representing the volatility of returns of the underlying asset). Using It\^o calculus, one can show that the call price $C$ obeys the Black-Scholes PDE with terminal condition: 
\begin{align}\label{BS_Equation}
\left\{
\begin{aligned}
 & \frac{\partial C}{\partial t} \, + \, \frac{1}{2} \sigma^2 S^2\frac{\partial^2 C}{\partial S^2} \, + \, r S \frac{\partial C}{\partial S} \, - \, rC \, = \, 0  \hspace*{1cm}  t\in[0,T]  \\
 & C \, = \, \max\{ S-K , 0 \} \, := \,  [S-K]^+   \hspace*{1.7cm}     t = T
\end{aligned}
\right.
\end{align}
The solution to \eqref{BS_Equation} can be written under the form of a convolution of the heat kernel with the modified terminal payoff:
\begin{equation}\label{BS_GreenForm}
C \, = \, e^{-r\tau} \, \int\limits_{-\infty}^{+\infty} [Se^{(r-\frac{\sigma^2}{2})\tau+y}-K]^{+} \, \frac{1}{\sigma\sqrt{2\pi\tau}}e^{-\frac{y^2}{2\sigma^2 \tau}} \, \id y
\end{equation}
where $\tau = T - t$. The presence of the heat kernel is a consequence of the fact that, under a suitable change of variables, the Black-Scholes PDE resumes to the heat equation (see full details in \cite{Willmott06} for instance): 
\begin{equation}\label{heat_equation}
\frac{\partial W}{\partial \tau} \, - \, \frac{\sigma^2}{2}\frac{\partial^2 W}{\partial x^2} \, = \, 0
\end{equation}
where $C=e^{-r\tau}W$.

Let us now introduce the notations for the volatility $z$, the forward strike price $F$ and the log-forward moneyness $k$:
\begin{equation}\label{notations}
z \, := \, \sigma\sqrt{\tau}
\hspace*{1cm}
F \, := \, Ke^{-r\tau}
\hspace*{1cm}
k \, := \, \log\frac{S}{F} \, = \, \log\frac{S}{K} + r\tau
\end{equation}
Making basic manipulations on the integral \eqref{BS_GreenForm} yields the Black-Scholes formula which, in our system of notations, reads:
\begin{equation}\label{BlackScholesFormula}
C \, = \, S N \left( \frac{k}{z} + \frac{z}{2}  \right) \, -  F N \left( \frac{k}{z} - \frac{z}{2}  \right) \, = \, F \left[ e^{k} N \left( \frac{k}{z} + \frac{z}{2}  \right) - N \left( \frac{k}{z} - \frac{z}{2}  \right) \right]
\end{equation}
where $N(.)$ is the standard normal cumulative distribution function. Our purpose is to provide an alternative to the Black-Scholes formula \eqref{BlackScholesFormula}, under the form of a simple and rapidly convergent series. To that extent, let us start by expressing the call option price \eqref{BS_GreenForm} as an integral over a domain of $\mathbb{C}^2$.  
\begin{proposition}
Let $P\subset\mathbb{C}^2$ be the polyhedra $P:=\{ \underline{t}\in\mathbb{C}^2 \, , \, Re(2t_1+t_2) > 2 \, , \, 0 < Re(t_2) < 1 \}$. Then, for any $\underline{c}\in P$,
\begin{equation}\label{BS_4}
C \, = \, F  \int\limits_{\underline{c}+i\mathbb{R}^2}  \, (-1)^{-t_2} \, 2^{\frac{1}{2}-t_1} \frac{\Gamma(t_2)\Gamma(1-t_2)\Gamma(-2+2t_1+t_2)}{\Gamma(t_1+\frac{1}{2})} \left( \frac{z^2}{2} - k  \right)^{2-2t_1-t_2} z^{2t_1-1}  \frac{\id \underline{t}}{(2i\pi)^2}
\end{equation}
\end{proposition}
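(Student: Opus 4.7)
The plan is to reduce \eqref{BS_GreenForm} to a canonical one-dimensional integral, split the two transcendental factors via Mellin-Barnes representations, and then evaluate the resulting elementary $u$-integral. First, the affine substitution $u := y + k - z^{2}/2$ moves the positivity region of the payoff to $\{u>0\}$ and simplifies the bracketed exponential to $e^{u}-1$; setting $a := z^{2}/2 - k$, one obtains
$$C \,=\, \frac{F}{z\sqrt{2\pi}}\,\int_{0}^{\infty}(e^{u} - 1)\,e^{-(u + a)^{2}/(2z^{2})}\,\mathrm{d}u.$$

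Second, I would introduce two Mellin-Barnes integrals. For the Gaussian factor, the Cahen-Mellin identity
$$e^{-x} \,=\, \frac{1}{2i\pi}\int_{c_{1} - i\infty}^{c_{1} + i\infty}\Gamma(t_{1})\,x^{-t_{1}}\,\mathrm{d}t_{1}\qquad (c_{1}>0)$$
is applied with $x = (u+a)^{2}/(2z^{2})$. For the payoff, I use the Barnes-type formula
$$e^{u} - 1 \,=\, \frac{1}{2i\pi}\int_{c - i\infty}^{c + i\infty}\Gamma(-s)\,(-u)^{s}\,\mathrm{d}s\qquad (0 < c < 1),$$
easily checked by closing the contour to the right and picking up the residues at $s = 1, 2, \ldots$ (the strip $0 < c < 1$ omits the residue at $s = 0$, which accounts for the subtracted constant). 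Inserting both representations and invoking Fubini, the inner $u$-integral reduces to the Euler-Beta integral
$$\int_{0}^{\infty}u^{s}(u + a)^{-2t_{1}}\,\mathrm{d}u \,=\, a^{s + 1 - 2t_{1}}\,\frac{\Gamma(s + 1)\,\Gamma(2t_{1} - s - 1)}{\Gamma(2t_{1})},$$
absolutely convergent under $\mathrm{Re}(s) > -1$ and $\mathrm{Re}(2t_{1} - s) > 1$, and multiplied by the branch phase $(-1)^{s} = e^{i\pi s}$ coming from $(-u)^{s}$.

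It then remains to massage the result into the shape \eqref{BS_4}. The reflection formula $\Gamma(-s)\Gamma(s + 1) = -\Gamma(s)\Gamma(1 - s)$ together with the Legendre duplication $\Gamma(2t_{1}) = \pi^{-1/2}\,2^{2t_{1} - 1}\,\Gamma(t_{1})\,\Gamma(t_{1} + 1/2)$ let me collapse the elementary pre-factor into $2^{1/2 - t_{1}}\,z^{2t_{1} - 1}/\Gamma(t_{1} + 1/2)$. Under the change of variable $t_{2} := 1 - s$, the convergence strip $0 < \mathrm{Re}(s) < 1$ becomes $0 < \mathrm{Re}(t_{2}) < 1$, the Beta condition $\mathrm{Re}(2t_{1} - s) > 1$ becomes $\mathrm{Re}(2t_{1} + t_{2}) > 2$ (together carving out the polyhedron $P$), the Gamma argument $2t_{1} - s - 1$ becomes $-2 + 2t_{1} + t_{2}$, the power $a^{s + 1 - 2t_{1}}$ becomes $(z^{2}/2 - k)^{2 - 2t_{1} - t_{2}}$, and the phase $e^{i\pi(1 - t_{2})} = -e^{-i\pi t_{2}}$ combines with the minus sign from the reflection formula to produce exactly the factor $(-1)^{-t_{2}}$ of \eqref{BS_4}.

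The main obstacle is this careful bookkeeping of signs and branches, together with the rigorous justification of Fubini: the one-variable Barnes representation of $e^{u} - 1$ is only conditionally convergent on a vertical contour, since $|(-u)^{s}|$ does not decay fast enough against $|\Gamma(-s)|$ as $|\mathrm{Im}\,s|\to\infty$. The exchange of integrations must therefore be established at the level of the full 2D integrand, where Stirling's asymptotic for $|\Gamma(t_{1})|$ along vertical lines supplies the missing decay and renders the double integral genuinely absolutely convergent. A secondary subtlety is that the Beta formula tacitly assumes $a = z^{2}/2 - k > 0$; the opposite regime is handled by analytic continuation in $k$ combined with a suitable branch choice for the complex power $(z^{2}/2 - k)^{2 - 2t_{1} - t_{2}}$.
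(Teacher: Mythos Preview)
Your argument is correct and lands on exactly \eqref{BS_4}, but it differs from the paper's derivation in how the factor $e^{u}-1$ is dispatched. The paper first inserts the Mellin--Barnes representation of the Gaussian kernel, then \emph{integrates by parts} in the real variable to kill the constant $-1$ (the boundary term vanishes), and only afterwards introduces a second Cahen--Mellin integral for the surviving exponential $e^{u}$; the Beta integral and Legendre duplication then yield \eqref{BS_4} directly, with no reflection formula and no auxiliary change of variable. You instead encode $e^{u}-1$ by a single Barnes integral in the strip $0<\operatorname{Re} s<1$, which is elegant because the constraint $0<\operatorname{Re} t_{2}<1$ of the polyhedron $P$ appears for free, but the price is the reflection identity $\Gamma(-s)\Gamma(s+1)=-\Gamma(s)\Gamma(1-s)$ and the substitution $t_{2}=1-s$ to match the stated integrand. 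Both routes are of comparable length and share the same formal step (a Mellin--Barnes representation attached to a growing exponential on $(0,\infty)$); your explicit remarks on Fubini via Stirling decay and on the assumption $z^{2}/2-k>0$ with analytic continuation for the opposite sign are points the paper leaves implicit.
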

\begin{proof}
With our notations \eqref{notations}, we can re-write the call price \eqref{BS_GreenForm} as:
\begin{align}\label{BS_2}
C & = \frac{F}{\sqrt{2\pi}} \,  \int\limits_{\frac{z^2}{2}-k}^{\infty} \, (e^{k - \frac{z^2}{2} +y} -1 ) \, \frac{1}{z} e^{-\frac{y^2}{2z^2}} \id y \, 
\end{align}
Let us introduce a Mellin-Barnes representation for the heat kernel-term in \eqref{BS_2} (see \eqref{Cahen} in Appendix, and \cite{Flajolet95,Erdélyi54} or any monograph on integral transforms):
\begin{equation}
\frac{1}{z} e^{-\frac{y^2}{2z^2}} \, = \, \frac{1}{z} \, \int\limits_{c_{1}-i\infty}^{c_{1}+i\infty} \Gamma(t_1)  \left( \frac{y^2}{2z^2}  \right)^{-t_1} \, \frac{\id t_1}{2i\pi} \hspace*{1cm} ( c_{1}> 0 )
\end{equation}
We thus have:
\begin{equation}
C \, = \, \frac{F}{\sqrt{2\pi}} \, \int\limits_{c_{1}-i\infty}^{c_{1}+i\infty} 2^{t_1} \Gamma(t_1) \, \int\limits_{\frac{z^2}{2}-k}^{\infty} \, (e^{k - \frac{z^2}{2} +y} -1 ) \,y^{-2 t_1} \, \id y \, z^{2t_1-1} \, \frac{\id t_1}{2i\pi} 
\end{equation}
Integrating by parts in the $y$-integral yields:
\begin{equation}
C \, = \, \frac{F}{\sqrt{2\pi}} \, \int\limits_{c_{1}-i\infty}^{c_{1}+i\infty} 2^{t_1} \frac{\Gamma(t_1)}{2t_1-1} \, \int\limits_{\frac{z^2}{2}-k}^{\infty} \, e^{k - \frac{z^2}{2} +y} \,y^{1-2 t_1} \, \id y \, z^{2t_1-1} \, \frac{\id t_1}{2i\pi} 
\end{equation}

\noindent
Let us introduce another Mellin-Barnes representation for the remaining exponential term (again, see \eqref{Cahen} in Appendix):
\begin{equation}
e^{k-\frac{z^2}{2}+y} \, = \, \int\limits_{c_{2}-i\infty}^{c_{2}+i\infty} \, (-1)^{-t_2} \, \Gamma(t_2)  \left( k-\frac{z^2}{2}+y   \right)^{-t_2} \, \frac{\id t_2}{2i\pi} \hspace*{1cm} ( c_{2}> 0 )
\end{equation}
Therefore the call price is:
\begin{multline}\label{BS_3}
C \, = \, \frac{F}{\sqrt{2\pi}} \times 
\int\limits_{c_{1}-i\infty}^{c_{1}+i\infty} \int\limits_{c_{2}-i\infty}^{c_{2}+i\infty}  \, (-1)^{-t_2} \, \frac{2^{t_1}}{2t_1-1} \, \Gamma(t_1) \Gamma(t_2) \, \int\limits_{\frac{z^2}{2}-k}^{\infty} y^{1-2t_1} \\ \left( k-\frac{z^2}{2}+y   \right)^{-t_2} \id y  \, z^{2t_1-1} \, \frac{\id t_1}{2i\pi} \wedge \frac{\id t_2}{2i\pi}
\end{multline}
The $y$-integral is a particular case of B\^eta-integral \cite{Abramowitz72} and equals:
\begin{equation}
\int\limits_{\frac{z^2}{2}-k}^{\infty} y^{1-2t_1} \left( k-\frac{z^2}{2}+y   \right)^{-t_2} \id y \, = \, \left( \frac{z^2}{2} - k  \right)^{2-2t_1-t_2} \frac{\Gamma(1-t_2)\Gamma(-2+2t_1+t_2)}{\Gamma(2t_1-1)}
\end{equation}
and converges on the conditions $Re(t_2)<1$ and $Re(2t_1+t_2)>2$; plugging into \eqref{BS_3}, using the Gamma function functional relation $(2t_1-1)\Gamma(2t_1-1)=\Gamma(2t_1)$ and the Legendre duplication formula for the Gamma function \cite{Abramowitz72}:
\begin{equation}
\frac{\Gamma(t_1)}{\Gamma(2t_1)} \, = \, 2^{1-2t_1} \sqrt{\pi} \, \frac{1}{\Gamma(t_1+\frac{1}{2})}
\end{equation}
we obtain the integral \eqref{BS_4}.
\end{proof}

\subsection{Residue summation}

Now we compute the double integral \eqref{BS_4} by means of summation of multidimensional residues.
\begin{theorem}
Let $Z:=\frac{z}{\sqrt{2}}$ (normalized volatility); then the Black-Scholes price of the European call is:
\begin{equation}\label{BS_series}
C \, = \, \frac{F}{2} \sum\limits_{\substack{n = 0 \\ m = 1}}^{\infty} \, \frac{(-1)^n}{n!\Gamma(1+\frac{m-n}{2})}  \,  \left( Z^2 - k  \right)^{n} Z^{m-n}
\end{equation}
\end{theorem}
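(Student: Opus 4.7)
The plan is to evaluate the double Mellin--Barnes integral \eqref{BS_4} by summing multidimensional residues in the sense recalled in the appendix. The integrand's singular locus is the union of three families of linear divisors arising from the Gamma factors: $D_1=\{t_2=-m'\}$ (from $\Gamma(t_2)$), $D_2=\{t_2=1+p'\}$ (from $\Gamma(1-t_2)$), and $D_3=\{2t_1+t_2=2-k'\}$ (from $\Gamma(-2+2t_1+t_2)$), for $m',p',k'\in\mathbb{N}_0$. The $2$-cycle $\underline{c}+i\mathbb{R}^2$ lies in the polyhedron $P$, which separates the three families, and I will deform it into the half-space of $\mathbb{C}^2$ where $Re(t_2)$ and $Re(2t_1+t_2)$ both decrease, collecting the countable set of simple transverse intersections $\underline{t}_{m',k'}=(1+(m'-k')/2,-m')\in D_1\cap D_3$ for $m',k'\in\mathbb{N}_0$.

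First, I verify that this deformation is legitimate. By Stirling's formula, the ratio $\Gamma(t_2)\,\Gamma(-2+2t_1+t_2)/\Gamma(t_1+1/2)$ decays exponentially along the chosen cone directions, while the remaining factors $\Gamma(1-t_2)$, $z^{2t_1-1}$, $(z^2/2-k)^{2-2t_1-t_2}$ and $2^{1/2-t_1}$ contribute only tempered growth (provided we temporarily assume $z^2/2>k$ and release this restriction at the end by analytic continuation). The opposite deformation would pick up residues on $D_2$, where the integrand grows; that route is therefore excluded. Hence the chosen cone is the admissible one for the multidimensional residue theorem, and the resulting sum converges absolutely.

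Second, I compute the local residue at each $\underline{t}_{m',k'}$. Writing the integrand as $H(\underline{t})\,\Gamma(t_2)\,\Gamma(-2+2t_1+t_2)\,\id\underline{t}$ with $H$ holomorphic at the pole, the local expansions
\begin{equation*}
\Gamma(t_2)\sim\frac{(-1)^{m'}}{m'!\,(t_2+m')},\qquad \Gamma(-2+2t_1+t_2)\sim\frac{(-1)^{k'}}{k'!\,(2t_1+t_2-2+k')}
\end{equation*}
yield a Grothendieck residue equal to $H(\underline{t}_{m',k'})\cdot(-1)^{m'+k'}/(2\,m'!\,k'!)$, the factor $2$ being the Jacobian of the two defining linear forms $(2t_1+t_2-2+k',\,t_2+m')$ with respect to $(t_1,t_2)$.

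Third, evaluating $H$ at $\underline{t}_{m',k'}$ produces $(-1)^{-t_2}=(-1)^{m'}$, $\Gamma(1-t_2)=m'!$, $\Gamma(t_1+1/2)=\Gamma(3/2+(m'-k')/2)$, $(z^2/2-k)^{2-2t_1-t_2}=(Z^2-k)^{k'}$, and, using $z=\sqrt{2}\,Z$, the identity $2^{-1/2-(m'-k')/2}z^{1+m'-k'}=Z^{1+m'-k'}$. The signs $(-1)^{m'}$ and the factorials $m'!$ cancel, and the contribution of the pair $(m',k')$ to $C$ reduces to
\begin{equation*}
\frac{F}{2}\cdot\frac{(-1)^{k'}}{k'!\,\Gamma(3/2+(m'-k')/2)}\,(Z^2-k)^{k'}\,Z^{1+m'-k'}.
\end{equation*}
Reindexing by $n:=k'\ge 0$ and $m:=1+m'\ge 1$ (so that $3/2+(m'-k')/2=1+(m-n)/2$ and $1+m'-k'=m-n$) turns this into precisely the summand of \eqref{BS_series}. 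The \emph{hard part} is the first step: carrying out the Stirling bounds on the full product of $\Gamma$-quotients to certify both the contour deformation and the absolute convergence of the residue series; once that admissibility is in hand, the local residue computation and the index bookkeeping are entirely routine.
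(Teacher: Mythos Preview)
Your proof is correct and follows essentially the same route as the paper: you deform into the same cone $\{Re(t_2)\le 0,\ Re(2t_1+t_2)\le 2\}$, pick up residues at the same lattice of intersections of the divisors of $\Gamma(t_2)$ and $\Gamma(-2+2t_1+t_2)$, and arrive at the identical summand after the same reindexing $m=m'+1$, $n=k'$. The only cosmetic differences are that the paper invokes the characteristic vector $\Delta=(1,1)$ from the appendix rather than appealing to Stirling directly, and it performs the linear change of variables $(u_1,u_2)=(-2+2t_1+t_2,\,t_2)$ before taking residues, whereas you absorb that step into the Jacobian factor $2$ in your Grothendieck-residue computation.
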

\begin{proof}
Let $\omega$ be the \textit{complex differential 2-form}
\begin{multline}\label{omega}
\omega \, := \,  (-1)^{-t_2} \, 2^{\frac{1}{2}-t_1} \, \frac{\Gamma(t_2)\Gamma(1-t_2)\Gamma(-2+2t_1+t_2)}{\Gamma(t_1+\frac{1}{2})} \left( \frac{z^2}{2} - k  \right)^{2-2t_1-t_2} z^{2t_1-1} \, \frac{\id t_1}{2i\pi} \wedge \frac{\id t_2}{2i\pi}
\end{multline}
so that we can write the call price \eqref{BS_4} under the form:
\begin{equation}\label{V_form}
C \, = \, F \, \int\limits_{\underline{c} + i\mathbb{R}^2} \, \omega \hspace*{1cm} (\underline{c} \in P)
\end{equation}

\noindent
This complex integral can be performed by means of summation of $\mathbb{C}^2$-residues, by virtue of a multidimensional analogue to the residue theorem valid for this specific class of integrals (see \eqref{Tsikh_residue} and references in Appendix). Indeed, the characteristic quantity (see definition \eqref{Delta_n} in appendix) associated to the differential form \eqref{omega} is:
 
\begin{figure}[t]
\centering
\includegraphics[scale=0.5]{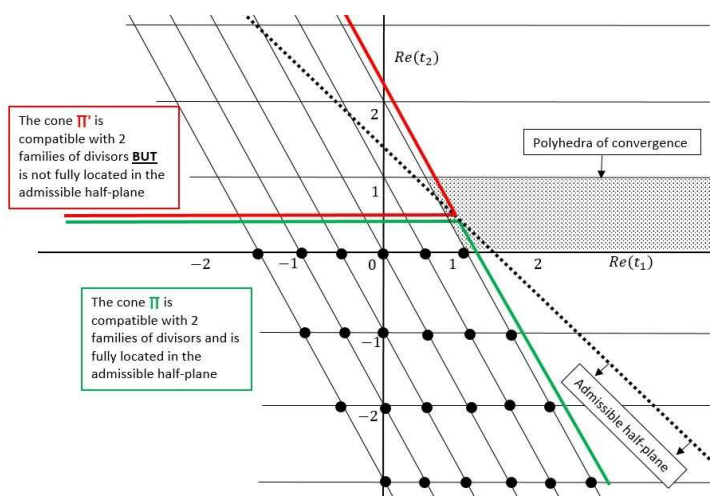}
\caption{The divisors $D_1$ (oblique lines) are induced by the $\Gamma(-2+2t_1+t_2)$ term, and $D_2$ (horizontal lines) by the $\Gamma(t_2)$ term. The intersection set $D_1 \cap D_2$ (the dots), located in the compatible green cone $\Pi$, gives birth to residues whose sum in the whole cone equals the integral \eqref{V_form}.}
\label{fig4}
\end{figure}

\noindent 
\begin{equation}
\Delta \, = \, 
\begin{bmatrix}
2 - 1 \\ 1 - 1 + 1
\end{bmatrix}
\, = \, 
\begin{bmatrix}
1 \\ 1
\end{bmatrix}
\end{equation}
and the admissible half-plane is 
\begin{equation}
\Pi_{\Delta} \, := \, 
\left\{
\underline{t}\in\mathbb{C}^2, \, Re (  \Delta \, . \, \underline{t}  ) \, < \,  \Delta . \underline{c} 
\right\}
\end{equation}
in the sense of the Euclidean scalar product. This half-plane is therefore located under the line
\begin{equation}
t_2 \, = \, -t_1 \, + \, c_{1} \, + \, c_{2}
\end{equation}
In this half-plane, the cone $\Pi$ as shown of fig. \ref{fig4} and defined by 
\begin{equation}
 \Pi \, : = \, \left\{ \underline{t}\in\mathbb{C}^2 \, , \, Re(t_2) \leq 0 \, , \, Re(2t_1+t_2) \leq 2       \right\}
\end{equation}
contains and is compatible with the two families of divisors
\begin{equation}
\left\{
\begin{aligned}
& D_1 \, = \, \left\{\underline{t}\in\mathbb{C}^2, -2+2t_1 + t_2 = -n_1 \,\, , \,\,\, n_1 \in\mathbb{N} \right\}
\\
& D_2 \, = \, \left\{\underline{t}\in\mathbb{C}^2, t_2 = -n_2 \,\, , \,\,\, n_2 \in\mathbb{N} \right\}
\end{aligned}
\right.
\end{equation}
induced by $\Gamma(-2+2t_1+t_2)$ and $\Gamma(t_2)$ respectively. This configuration is shown on fig. \ref{fig4}. To compute the residues associated to every element of the singular set $ D_1 \cap D_2$, we change the variables:
\begin{equation}
\left\{
\begin{aligned}
& u_1 \, := \, -2 + 2t_1 + t_2 \\
& u_2 \, := \, t_2
\end{aligned}
\right.
\longrightarrow
\left\{
\begin{aligned}
& t_1 \, = \, \frac{1}{2} (2 + u_1 - u_2) \\
& t_2 \, = \, u_2 \\ 
& dt_1 \wedge dt_2 \, = \, \frac{1}{2} \, du_1 \wedge du_2
\end{aligned}
\right.
\end{equation}
so that in this new configuration $\omega$ reads
\begin{equation}
\omega \, = \, \frac{1}{2} (-1)^{-u_2} \, 2^{\frac{u_2-u_1-1}{2}} \, \frac{\Gamma(u_2)\Gamma(1-u_2)\Gamma(u_1)}{\Gamma(1+\frac{u_1-u_2+1}{2})} \left( \frac{z^2}{2} - k  \right)^{-u_1} z^{u_1-u_2+1} \, \frac{\id u_1}{2i\pi } \wedge \frac{\id u_2}{2i\pi}
\end{equation}
With this new variables, the divisors $D_1$ and $D_2$ are induced by the $\Gamma(u_1)$ and $\Gamma(u_2)$ functions in $(u_1,u_2)=(-n,-m)$, $n,m\in\mathbb{N}$. From the singular behavior of the Gamma function around a singularity \eqref{sing_Gamma}, we can write
\begin{multline}
\omega \, \underset{(u_1,u_2)\rightarrow (-n,-m)}{\sim} \, 
\frac{1}{2} (-1)^{-u_2} \, \frac{(-1)^{n+m}}{n!m!} \, 2^{\frac{u_2-u_1-1}{2}} \, \frac{\Gamma(1-u_2)}{\Gamma(1+\frac{u_1-u_2+1}{2})} \left( \frac{z^2}{2} - k  \right)^{-u_1} \\ z^{u_1-u_2+1} \, \frac{\id u_1}{2i\pi (u_1+n)} \wedge \frac{\id u_2}{2i\pi (u_2+m)}
\end{multline}
and therefore the residues are, by the Cauchy formula:
\begin{equation}
\res_{(-n,-m)} \, \omega \, = \,  (-1)^{n} \, 2^{\frac{n-m-1}{2}-1} \, \frac{1}{n!\Gamma(1+\frac{m-n+1}{2})}  \,  \left( \frac{z^2}{2} - k  \right)^{n} z^{m-n+1}
\end{equation}
By virtue of the residue theorem \eqref{Tsikh_residue}, the sum of the residues in the whole cone equals the integral \eqref{V_form}:
\begin{equation}\label{V1-series}
V \, = \, F \sum\limits_{n,m=0}^{\infty} \, (-1)^{n} \, 2^{\frac{n-m-1}{2}-1} \, \frac{1}{n!\Gamma(1+\frac{m-n+1}{2})}  \,  \left( \frac{z^2}{2} - k  \right)^{n} z^{m-n+1}
\end{equation}
\noindent
We can further simplify by changing the index $m\rightarrow m+1$ and introducing $Z:=\frac{z}{\sqrt{2}}= \frac{\sigma\sqrt{\tau}}{\sqrt{2}}$, and we finally obtain the series \eqref{BS_series}.
\end{proof}
Formula \eqref{BS_series} is a new representation for the Black-Scholes price of the European call option, and can be very easily implemented in order to make precise calculations; one may note that this expansion is surprisingly simple and compact, when compared to the Black-Scholes formula \eqref{BlackScholesFormula}. 
In table \ref{fig:BS_Series}, we test the precision of \eqref{BS_series}, by comparing the prices obtained via the usual Black-Scholes formula \eqref{BlackScholesFormula} to the prices given by various truncations ($n=0\dots N_{max}$ and $m=1\dots M_{max}$) of our series \eqref{BS_series}. One observes that the convergence is extremely fast: typically, $N_{max}=M_{max}=10$ are enough to obtain a precision of $10^{-7}$, and, when the option is deeply in or out-of-the-money, the convergence is slightly slower but remains very efficient (notably in the ITM region). Let us also remark that, when the asset is at-the-money forward (that is, when $S=F$ and therefore when $k=0$), then \eqref{BS_series} resumes to the power series:
\begin{equation}\label{Brenner_1}
C \, = \, \frac{S}{2} \sum\limits_{\substack{n = 0 \\ m = 1}}^{\infty} \, \frac{(-1)^n}{n!\Gamma(1+\frac{m-n}{2})}  \,  Z^{m+n}
\end{equation}

\begin{table}[h!]
\centering
\begin{scriptsize}
\begin{tabular}{|l|c|c|c|c|}
\hline
Market configuration & BS formula & $N_{max}=M_{max}=5 $ & $N_{max}=M_{max}=10 $ & $N_{max}=M_{max}=20 $  \\
\hline 
Deep OTM ($S=3000$) & 25.8385546  &  14.6150001  & 25.9147783 & 25.8385533 \\
OTM ($S=3800$)  & 235.5135954 & 235.5112726  & 235.5135954 & 235.5135954  \\
ATM forward  & 315.4523494 & 315.4501517 &  315.4523494 & 315.4523494 \\
ITM ($S=4200$) & 458.7930654 & 458.7883563 & 458.7930654 & 458.7930654 \\
Deep ITM ($S=5000$) & 1093.1653246 & 1091.3521829 &  1093.1662581 & 1093.1653246 \\
  \hline
\end{tabular}
\end{scriptsize}
\caption{Convergence of the series \eqref{BS_series} to the usual Black-Scholes formula, for various market configurations. In all cases, $K=4000$, $r=1\%$, $\sigma= 20 \%$ and $\tau$=1Y.}
\label{fig:BS_Series}
\end{table}

However, despite its simplicity, formula \eqref{BS_series} does not exhibit clearly the ordering in powers of $Z$ and the simplifications that can be operated; for instance in the power series \eqref{Brenner_1} all even powers of $Z$ vanish (as expected from the Taylor series for the normal distribution). Let us therefore refine \eqref{BS_series} into an expansion with an even more tractable structure.
\begin{corollary}
Let the $\varphi_J$ be the real functions defined by
\begin{equation}
\varphi_J(x) \, = \, \sum_{n=0}^{J-1} \, \frac{(-1)^n}{n!\Gamma(1+\frac{J}{2}-n)}(1-x)^n \hspace*{1cm} J= 1,2,\dots
\end{equation}
Then the Black-Scholes price of the European call is
\begin{equation}\label{BS_Series_2}
C \, = \, \frac{F}{2} \sum_{J=1}^\infty \, Z^J \varphi_J \left( \frac{k}{Z^2} \right)
\end{equation}
\end{corollary}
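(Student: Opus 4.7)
The plan is to obtain \eqref{BS_Series_2} as a simple reindexing of the double series \eqref{BS_series}, collecting terms by total degree in the normalized volatility $Z$. Specifically, I expect the correct change of summation index to be $J := m+n$, so that the exponent $m-n$ appearing in \eqref{BS_series} is rewritten as $J-2n$, and the half-integer argument of the Gamma function $1+\tfrac{m-n}{2}$ becomes $1+\tfrac{J}{2}-n$; the constraints $n\geq 0$, $m\geq 1$ translate into $J\geq 1$, $0 \leq n \leq J-1$, which matches exactly the summation range defining $\varphi_J$.

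Concretely, I would first substitute $m = J-n$ into the general term of \eqref{BS_series}, obtaining
\begin{equation*}
\frac{(-1)^n}{n!\,\Gamma\!\left(1+\tfrac{J}{2}-n\right)}\,(Z^2-k)^n\, Z^{J-2n}.
\end{equation*}
Factoring $Z^J$ out of every such term and rewriting $(Z^2-k)^n Z^{-2n} = (1-k/Z^2)^n$ yields
\begin{equation*}
Z^{J}\sum_{n=0}^{J-1}\frac{(-1)^n}{n!\,\Gamma\!\left(1+\tfrac{J}{2}-n\right)}\left(1-\frac{k}{Z^2}\right)^{n} \;=\; Z^{J}\,\varphi_{J}\!\left(\frac{k}{Z^{2}}\right),
\end{equation*}
which is exactly the claimed summand. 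Summing over $J\geq 1$ then produces \eqref{BS_Series_2}.

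The only non-routine issue is justifying the rearrangement of the double sum into a sum indexed by $J$. Since the map $(n,m)\mapsto(n,J)=(n,m+n)$ is a bijection from $\mathbb{N}_{\geq 0}\times\mathbb{N}_{\geq 1}$ onto $\{(n,J): J\geq 1,\, 0\leq n\leq J-1\}$, the rearrangement is purely combinatorial and requires no analytic hypothesis beyond absolute convergence of \eqref{BS_series}; this latter point has been established in the preceding subsection (and will be revisited when analyzing the convergence speed), so one may invoke Fubini's theorem for the counting measure without further work. I therefore expect no real obstacle: the proof reduces to one change of index and a trivial algebraic regrouping, and can be dispatched in a few lines.
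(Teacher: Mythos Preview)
Your proposal is correct and follows essentially the same route as the paper: the paper also sets $J:=m+n$, notes that $m\geq 1$ forces $n\leq J-1$, forms the partial sum along each ``oblique line'' $J$, and factors out $Z^J$ to obtain $Z^J\varphi_J(k/Z^2)$. The only cosmetic difference is that the paper does not pause to justify the rearrangement, whereas you invoke absolute convergence explicitly; otherwise the arguments coincide.
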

\begin{proof}
Let $J:=m+n$ in the series \eqref{BS_series}, $J\geq 1$. As $m\geq 1$, this implies $J-n \geq 1$, that is, the upper bound of the $n$ summation within this new configuration is $n\leq J-1$. This means that we are performing the summation along oblique lines instead of horizontal or vertical ones (see Fig \ref{fig5}); we can therefore introduce the $J$-th partial sum (recall that now $m-n=J-2n$)
\begin{align}
C_J  \, & :=  \frac{F}{2} \sum_{n=0}^{J-1} \frac{(-1)^n}{n!\Gamma(1+\frac{J}{2}-n)} (Z^2-k)^n Z^{J-2n} \\
               &  =  \frac{F}{2} \, Z^J \, \sum_{n=0}^{J-1} \frac{(-1)^n}{n!\Gamma(1+\frac{J}{2}-n)} \left(1-\frac{k}{Z^2} \right)^n           
\end{align}
and summing over all the oblique lines $J\geq 1$ yields formula \eqref{BS_Series_2}
\begin{figure}[h]
\centering
\includegraphics[scale=0.4]{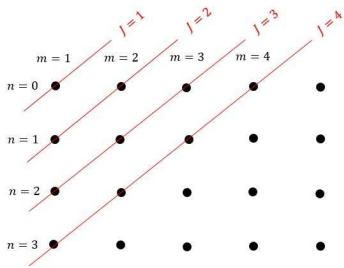}
\caption{The summation \eqref{BS_series} can be performed on oblique lines ($J=1,2,\dots$) instead of vertical lines $m=1,2,\dots$ or horizontal lines $n=0,1,\dots$.}
\label{fig5}
\end{figure}
\end{proof}

\begin{proposition}\label{proposition_2j}
For any integer $j\geq 1$,
\begin{equation}
Z^{2j} \varphi_{2j} \left( \frac{k}{Z^2} \right) \, = \, \frac{k^j}{j!}
\end{equation}
\end{proposition}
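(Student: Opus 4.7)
The plan is to expand the definition of $\varphi_{2j}$ and show that the sum collapses to a binomial expansion. First I would write
\begin{equation*}
Z^{2j}\varphi_{2j}\!\left(\frac{k}{Z^{2}}\right) \, = \, Z^{2j}\sum_{n=0}^{2j-1}\frac{(-1)^{n}}{n!\,\Gamma(1+j-n)}\left(1-\frac{k}{Z^{2}}\right)^{n},
\end{equation*}
and then observe that the key simplification comes from the poles of the Gamma function: for $n = j+1,\dots,2j-1$ the argument $1+j-n$ is a non-positive integer, so $1/\Gamma(1+j-n)=0$, and for $n\le j$ we have $\Gamma(1+j-n)=(j-n)!$. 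Hence the sum truncates at $n=j$.

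Next I would use $\left(1-k/Z^{2}\right)^{n}=(Z^{2}-k)^{n}Z^{-2n}$ to pull the factor $Z^{2j}$ inside, obtaining
\begin{equation*}
Z^{2j}\varphi_{2j}\!\left(\frac{k}{Z^{2}}\right) \, = \, \frac{1}{j!}\sum_{n=0}^{j}\binom{j}{n}(-1)^{n}(Z^{2}-k)^{n}(Z^{2})^{j-n}.
\end{equation*}
This is nothing but the binomial expansion of $\bigl(Z^{2}-(Z^{2}-k)\bigr)^{j}/j!=k^{j}/j!$, which is the desired identity.

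There is no real obstacle here; the only subtle point is justifying that the tail $n=j+1,\dots,2j-1$ vanishes, which rests solely on the standard fact that $1/\Gamma$ vanishes at non-positive integers. Everything else is the classical binomial theorem applied to $(Z^{2})-(Z^{2}-k)=k$.
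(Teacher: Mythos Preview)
Your proof is correct and follows essentially the same approach as the paper: truncate the sum at $n=j$ using the fact that $1/\Gamma$ vanishes at non-positive integers, then recognize the remaining sum as a binomial expansion yielding $k^{j}/j!$. The only cosmetic difference is that the paper simplifies $\varphi_{2j}(k/Z^{2})$ to $k^{j}/(j!\,Z^{2j})$ first and multiplies by $Z^{2j}$ afterward, whereas you fold the $Z^{2j}$ factor in before applying the binomial theorem.
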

\begin{proof}
By definition of $\varphi_{2j}$, we have
\begin{equation}
\varphi_{2j} \left( \frac{k}{Z^2} \right) \, = \,  \sum_{n=0}^{2j-1} \frac{(-1)^n}{n!\Gamma(j+1-n)} \left(1-\frac{k}{Z^2} \right)^n
\end{equation}
As the Gamma function in the denominator is infinite for negative integers, all terms after $n=j$ vanish and therefore the sum can be written as:
\begin{align}
\varphi_{2j} \left( \frac{k}{Z^2} \right) \, & = \,  \sum_{n=0}^{j} \frac{(-1)^n}{n!\Gamma(j+1-n)} \left(1-\frac{k}{Z^2} \right)^n \\
 & = \, \sum_{n=0}^{j} \frac{1}{n!(j-n)!} \left(\frac{k}{Z^2} -1 \right)^n \\ 
 & = \, \frac{1}{j!} \frac{k^j}{Z^{2j}}
\end{align}
in virtue of the binomial theorem.

\end{proof}

\begin{theorem}
The Black-Scholes price of the European call option is:
\begin{equation}\label{BS_series_3}
C \, = \, \frac{1}{2}(S-F) \, + \, \frac{F}{2} \, \sum_{\substack{j\geq 0 \\ n\leq 2j}} \, Z^{2j+1} \, \frac{(-1)^n}{n!\Gamma(\frac{3}{2}+j-n)}  \, \left( 1-\frac{k}{Z^2} \right)^n
\end{equation}
\end{theorem}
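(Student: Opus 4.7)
The plan is to start from the oblique-line expansion \eqref{BS_Series_2} and split it according to the parity of the summation index $J$. Writing
\begin{equation*}
C \, = \, \frac{F}{2}\sum_{j\geq 1} Z^{2j}\varphi_{2j}\!\left(\frac{k}{Z^2}\right) \, + \, \frac{F}{2}\sum_{j\geq 0} Z^{2j+1}\varphi_{2j+1}\!\left(\frac{k}{Z^2}\right)
\end{equation*}
separates the contributions into an ``even'' part, where Proposition \ref{proposition_2j} applies directly, and an ``odd'' part, which will survive as the explicit double series appearing in the statement.

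The first step is then to invoke Proposition \ref{proposition_2j} to collapse the even part into $\sum_{j\geq 1}k^j/j!$. This is recognized as $e^k-1$, and by the definition of the log-forward moneyness $k=\log(S/F)$ in \eqref{notations} we have $e^k = S/F$, so
\begin{equation*}
\frac{F}{2}\sum_{j\geq 1} Z^{2j}\varphi_{2j}\!\left(\frac{k}{Z^2}\right) \, = \, \frac{F}{2}(e^k-1) \, = \, \frac{1}{2}(S-F).
\end{equation*}
This produces the first term of \eqref{BS_series_3}, and it is also the step that gives the formula its financial interpretation: the even powers of $Z$ recombine into the intrinsic forward payoff, isolating the purely volatility-dependent part in the remaining odd series.

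The second step is purely formal: insert the definition of $\varphi_{2j+1}$ into the odd part. Since
\begin{equation*}
\varphi_{2j+1}(x) \, = \, \sum_{n=0}^{2j} \frac{(-1)^n}{n!\,\Gamma\!\left(\tfrac{3}{2}+j-n\right)} (1-x)^n,
\end{equation*}
evaluating at $x=k/Z^2$ and substituting yields the double sum over $j\geq 0$ and $n\leq 2j$ that appears on the right-hand side of \eqref{BS_series_3}. Combining this with the intrinsic term from the previous step gives the claimed identity.

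There is essentially no obstacle: the argument is a reindexing of \eqref{BS_Series_2} combined with one application of Proposition \ref{proposition_2j}. The only point that deserves a brief comment is the legitimacy of splitting the sum into even and odd parts, which is justified by the uniform/exponential convergence of \eqref{BS_series} already established earlier in the paper (so both subseries converge absolutely and can be rearranged at will); and the convergence of $\sum_{j\geq 1}k^j/j!=e^k-1$, which is immediate for every $k\in\mathbb{R}$.
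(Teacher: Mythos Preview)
Your proof is correct and follows essentially the same route as the paper: split \eqref{BS_Series_2} into even and odd $J$, apply Proposition~\ref{proposition_2j} to sum the even part to $e^k-1=\frac{S}{F}-1$, and unfold the definition of $\varphi_{2j+1}$ in the odd part. The only addition you make is the explicit remark on absolute convergence justifying the rearrangement, which the paper leaves implicit.
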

\begin{proof}
Let us write
\begin{align}
C  & \, = \, \frac{F}{2}  \left[ \sum_{j=1}^\infty Z^{2j} \varphi_{2j} \left( \frac{k}{Z^2}  \right) \, + \, \sum_{j=0}^\infty Z^{2j+1} \varphi_{2j+1} \left( \frac{k}{Z^2}  \right) \right] \\
 & \, = \, \frac{F}{2} \left[ (e^k-1) \, + \, \sum_{j=0}^\infty Z^{2j+1} \varphi_{2j+1} \left( \frac{k}{Z^2}  \right) \right]
\end{align}
where we have used proposition \ref{proposition_2j} to perform the first sum. Recalling that $e^{k} = \frac{S}{F} $ and simplifying yields eq. \eqref{BS_series_3}
\end{proof}

Formula \eqref{BS_series_3} is less compact than formula \eqref{BS_series}, but may be more suitable for practical applications. Notably, it immediately follows from the call-put parity $C-P = S - F$ that the Black-Scholes put is given by:
\begin{equation}
P \, = \, \frac{1}{2}(F-S) \, + \, \frac{F}{2} \, \sum_{\substack{j\geq 0 \\ n\leq 2j}} \, Z^{2j+1} \, \frac{(-1)^n}{n!\Gamma(\frac{3}{2}+j-n)}  \, \left( 1-\frac{k}{Z^2} \right)^n
\end{equation}
In \eqref{BS_firstterms} we write down the expansion \eqref{BS_series_3} up to order $Z^5$:
{\footnotesize
\begin{align}\label{BS_firstterms}
& C  \, =  \, \hspace*{0.2cm} \frac{1}{2} (S-F) \\
&   \, + \, \frac{F}{2} \, \left[ \hspace*{0.3cm} Z \,\,  \frac{1}{\Gamma(\frac{3}{2})}   \right. \nonumber  \\
&      \, \hspace*{0.8cm} + \, Z^3 \left( \frac{1}{\Gamma(\frac{5}{2})} - \frac{1}{\Gamma(\frac{3}{2})} \left( 1 - \frac{k}{Z^2} \right) + \frac{1}{2\Gamma(\frac{1}{2})} \left( 1 - \frac{k}{Z^2} \right)^2 \right)  \nonumber \\
&   \, \hspace*{0.8cm} + \, Z^5 \left( \frac{1}{\Gamma(\frac{7}{2})} - \frac{1}{\Gamma(\frac{5}{2})} \left( 1 - \frac{k}{Z^2} \right) + \frac{1}{2\Gamma(\frac{3}{2})} \left( 1 - \frac{k}{Z^2} \right)^2 - \frac{1}{6\Gamma(\frac{1}{2})} \left( 1 - \frac{k}{Z^2} \right)^3 \right. \nonumber \\
& \, \hspace*{8cm} \left. + \frac{1}{24\Gamma(-\frac{1}{2})} \left( 1 - \frac{k}{Z^2} \right)^4 \right) + \cdots \nonumber 
\end{align}
}
Note that the values of the Gamma function involved in \eqref{BS_series_3} are actually known analytically, as a consequence of the functional relation
\begin{equation}
 \Gamma\left(\frac{3}{2}+j-n\right) \, = \,  \left(\frac{1}{2}+j-n\right) \Gamma\left(\frac{1}{2}+j-n\right)   
\end{equation}
and of the particular values of the Gamma function at half-integers \cite{Abramowitz72}:
\begin{equation}
\Gamma\left(\frac{1}{2}+j-n\right) \, = \,
\left\{
\begin{aligned}
& \frac{(2(j-n))!}{4^{j-n}(j-n)!} \sqrt{\pi}  & \,\, (j-n \geq 0) \\
& \frac{(-4)^{|j-n|}|j-n|!}{(2|j-n|)!} \sqrt{\pi} & \,\,  (j-n < 0)
\end{aligned}
\right.
\end{equation}
Therefore, the evaluation of these terms resumes to computations of factorials, which are easily carried out without needing sophisticated tools. Moreover, only few terms of the series \eqref{BS_series_3} are needed to obtain a very precise approximation of the Black-Scholes formula \eqref{BlackScholesFormula}, as demonstrated by figure \ref{fig6} and by the boundary established in the next subsection.

\begin{figure}[h]
\centering
\includegraphics[scale=0.4]{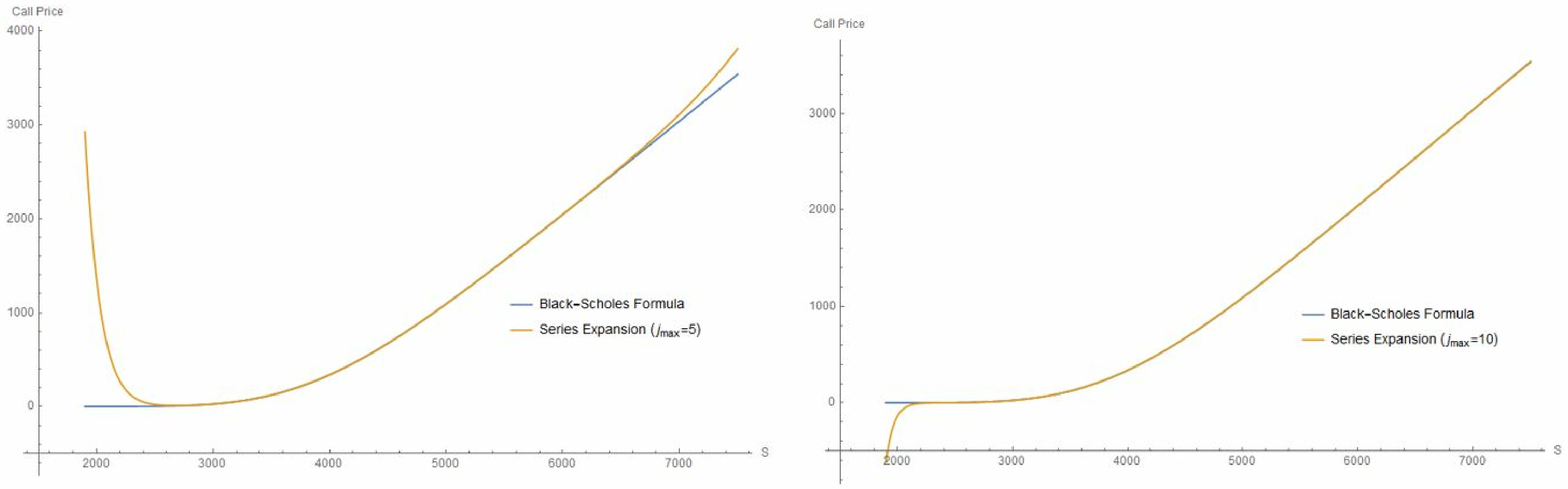}
\caption{Comparisons between the prices obtained via the Black-Scholes formula \eqref{BlackScholesFormula} and the series formula \eqref{BS_series_3} truncated at $j_{max}=5$ (left graph) or $j_{max}=10$ (right graph), for $K=4000$, $r=1\%$, $\sigma=20\%$ and $\tau=$1Y. One can observe that, for a wide range of prices (approximately $S\in[2500,6500]$), it is sufficient to truncate the series \eqref{BS_series_3} to $j_{max}=5$ to reproduce the Black-Scholes formula. Increasing the number of terms to $j_{max}=10$ widens the range of convergence to even more deeply in or out-of-the money situations.}
\label{fig6}
\end{figure}

\subsection{Speed of convergence}

\begin{proposition}\label{convergence}
Let $\alpha:= \max\left( 1, \left\vert 1 - \frac{k}{Z^2} \right \vert  \right)$. Then the generic term of the series \eqref{BS_series_3} is bounded (uniformly in $n$) by
\begin{equation}\label{Speed}
\left\vert Z^{2j+1} \, \frac{(-1)^n}{n!\Gamma(\frac{3}{2}+j-n)}  \, \left( 1-\frac{k}{Z^2} \right)^n \right\vert \,  \leq \, \frac{Z}{\sqrt{\pi}} \frac{(\alpha Z)^{2j}}{( \lfloor \frac{j}{2} \rfloor +1) !}
\end{equation}
\end{proposition}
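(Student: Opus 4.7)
The plan is to separate the factors depending on $Z$ and $k$ from the purely combinatorial factor and bound each piece in turn. Since $\alpha\geq 1$ by definition and $0\leq n\leq 2j$ for every term of \eqref{BS_series_3}, one has $|1-k/Z^2|^n \leq \alpha^n \leq \alpha^{2j}$, so
$$\left| Z^{2j+1}\,\frac{(-1)^n}{n!\,\Gamma(\tfrac{3}{2}+j-n)}\,\Bigl(1-\tfrac{k}{Z^2}\Bigr)^n \right| \;\leq\; Z\,(\alpha Z)^{2j}\cdot\frac{1}{n!\,|\Gamma(\tfrac{3}{2}+j-n)|}.$$
The whole statement therefore reduces to the uniform combinatorial estimate
$$\frac{1}{n!\,|\Gamma(\tfrac{3}{2}+j-n)|} \;\leq\; \frac{1}{\sqrt{\pi}\,(\lfloor j/2\rfloor+1)!}\qquad (0\leq n\leq 2j).$$

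I would then analyse where the positive function $f(n):=n!\,|\Gamma(\tfrac{3}{2}+j-n)|$ is minimised. Using the Gamma functional equation the consecutive ratio is
$$\frac{f(n+1)}{f(n)} \;=\; \frac{n+1}{|\tfrac{1}{2}+j-n|},$$
which is less than $1$ exactly when $n<\lceil j/2\rceil$; moreover for $n\geq j$ the denominator drops to $\leq 1/2$, so $f(n)$ multiplies by at least $2(n+1)$ at each step once $n$ exceeds $j$. Consequently $f$ is unimodal on $\{0,\ldots,2j\}$ and its minimum is attained at $n^{\star}=\lceil j/2\rceil$, just before the ``half-integer sign flip'' of $\Gamma$; the rapid post-$j$ growth handles all the remaining indices in one stroke.

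It then remains to estimate $f(n^{\star})$ from below by $\sqrt{\pi}\,(\lfloor j/2\rfloor+1)!$. Plugging the explicit half-integer Gamma values recalled just before the proposition, namely $\Gamma(\tfrac{1}{2}+m)=(2m)!\sqrt{\pi}/(4^{m}m!)$ and $|\Gamma(\tfrac{1}{2}-m)|=4^{m}m!\sqrt{\pi}/(2m)!$ for $m\in\mathbb{N}$, into $f(n^{\star})$ converts the target inequality into a statement about central binomial coefficients. In the even case $j=2k$ it takes the compact form $k!\,\binom{2k+2}{k+1}\geq 4^{k+1}$, and the odd case $j=2k+1$ yields a completely analogous inequality; both can be obtained by a short induction on $k$ or by invoking Stirling's estimate $\binom{2k}{k}\sim 4^{k}/\sqrt{\pi k}$.

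The main obstacle is precisely this last step: turning a ratio of half-integer Gamma values into a clean factorial bound while preserving the leading constant $1/\sqrt{\pi}$. Also a little care is needed at the three indices $n=j,j+1,j+2$ where $\Gamma(\tfrac{3}{2}+j-n)$ changes sign, in order to confirm that none of them competes with $f(n^{\star})$; but once the unimodality observation and the explicit half-integer formulas are in hand, this is mostly mechanical bookkeeping distributed across the two parities of $j$.
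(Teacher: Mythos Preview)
Your reduction is exactly the one the paper uses: factor out $Z\cdot(\alpha Z)^{2j}$ and then bound the combinatorial coefficient $|R_{j,n}|=1/\bigl(n!\,|\Gamma(\tfrac32+j-n)|\bigr)$ uniformly in $n$. The paper locates the maximum of $|R_{j,n}|$ at $n=\lfloor j/2\rfloor+1$ (by ``derivating with respect to $n$'') and then finishes with the single line $\Gamma(\tfrac12+j-\lfloor j/2\rfloor)\geq\Gamma(\tfrac12)=\sqrt{\pi}$; you instead locate the minimum of $f(n)=n!\,|\Gamma(\tfrac32+j-n)|$ via the ratio test and then substitute the explicit half-integer Gamma values. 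So the architecture is identical; your execution is simply the more careful of the two.

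The gap is in your last step, and it is genuine. The inequality you arrive at in the even case, $k!\,\binom{2k+2}{k+1}\geq 4^{k+1}$, is \emph{false} for $k=0,1,2$ (it reads $2\geq 4$, $6\geq 16$, $40\geq 64$), and the odd-$j$ analogue likewise fails for the first two values. This is not an error in your computation but a reflection of the target itself: taking $k=0$ and, say, $j=1$, $n=1$ in the proposition gives left-hand side $2Z^3/\sqrt{\pi}$ against the claimed bound $Z^3/\sqrt{\pi}$, and one checks directly that the stated inequality is violated for every $j\leq 4$. The paper's proof conceals this behind the monotonicity claim ``$x\mapsto\Gamma(\tfrac12+x)$ grows for $x>0$'', which is not true near $x=1$ (indeed $\Gamma(\tfrac32)=\sqrt{\pi}/2<\sqrt{\pi}$). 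Your more explicit route exposes the problem rather than hiding it: the bound as written only holds from $j\geq 5$ onward, or for all $j$ if one replaces the constant $1/\sqrt{\pi}$ by $2/\sqrt{\pi}$.
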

\begin{proof}
Let $R_{j,n}:=\frac{(-1)^n}{n!\Gamma(\frac{3}{2}+j-n)}$. We have:
\begin{equation}\label{Speed1}
\left\vert Z^{2j+1} \, \frac{(-1)^n}{n!\Gamma(\frac{3}{2}+j-n)}  \, \left( 1-\frac{k}{Z^2} \right)^n \right\vert \,  \leq \, Z^{2j+1} \left\vert R_{j,n} \right\vert \, \alpha^{2j}
\end{equation}
Derivating with respect to $n$, it is easy to see that $R_{j,n}$ is maximal on the line $n= \lfloor \frac{j}{2} \rfloor +1$:
\begin{equation}\label{Speed2}
\left\vert R_{j,n}  \right\vert \, \leq \, \left\vert R_{j,\lfloor \frac{j}{2} \rfloor +1} \right\vert \, = \, \frac{1}{(\lfloor \frac{j}{2} \rfloor +1)!\Gamma(\frac{1}{2}+j- \lfloor \frac{j}{2} \rfloor)}
\end{equation}
As $x\rightarrow\Gamma(\frac{1}{2}+x)$ grows for $x>0$, we have $\Gamma\left(\frac{1}{2}+j- \lfloor \frac{j}{2} \rfloor \right) \, \geq \, \Gamma\left(  \frac{1}{2} \right) \, = \, \sqrt{\pi}$ which terminates the proof.
\end{proof}

Therefore, if we wish to attain a precision of $\epsilon$ in the series representation \eqref{BS_series_3}, we just need to find the integer $j_\epsilon$ such as
\begin{equation}\label{Speed3}
\frac{Z}{\sqrt{\pi}} \frac{(\alpha Z)^{2j_\epsilon}}{( \lfloor \frac{j_\epsilon}{2} \rfloor +1) !} \, < \, \epsilon
\end{equation}
On each j-line, there are $(2j+1)$ terms to sum and therefore, the total number of terms it suffices to consider to attain the desired precision is:
\begin{equation}
\sum\limits_{j=0}^{j_\epsilon} \, (2j+1) \, = \, (j_\epsilon + 1)^2
\end{equation}
In table \ref{fig:jepsilon} we define $M_{j_\epsilon}$ to be the l.h.s. of \eqref{Speed3} and we  fix a typical set of market parameters ($S=4200$, $K=4000$, $\sigma = 20 \%$, $r=1\%$ and $\tau = 1Y$). We compute the values of $M_{j_\epsilon}$ for $j_\epsilon\geq 2$ and deduce which precision is attained and after how many terms.

\begin{table}[h!]
\centering
\begin{scriptsize}
\begin{tabular}{|c|ccc|}
\hline
$j_\epsilon$ & $M_{j_\epsilon}$ & Attained precision ($\epsilon$) & Total number of terms $(j_\epsilon + 1)^2$  \\
\hline 
2 & $0.0002258$ & $10^{-2}$  & 9 \\
3 & $0.0000170$ & $10^{-3}$ &  16 \\
4 & $4.27 \times 10^{-7}$ & $10^{-6}$ & 25  \\
5 & $3.21 \times 10^{-8}$ & $10^{-7}$ &  36 \\
6 & $6.03 \times 10^{-10}$ & $10^{-9}$ & 49  \\
7 & $4.54 \times 10^{-11}$ & $10^{-10}$ & 64  \\
  \hline
\end{tabular}
\end{scriptsize}
\caption{Decrease of the bound $M_{j_\epsilon}$ (choice of parameters: $S=3800, \, K=4000, \, r=1\%, \sigma=20\%, \, \tau=1Y$). Only $16$ terms are needed to attain a precision of $10^{-3}$, and 64 for a $10^{-10}$ precision.}
\label{fig:jepsilon}
\end{table}

In table \ref{fig:series1} we plot the first terms of the series representation \eqref{BS_series_3} under the form of lower triangular $(j,n)$-matrix (the first entry of the matrix is the $\frac{1}{2}(S-F)$ term) for a similar set of parameters.

\begin{table}[h!]
\centering
\begin{scriptsize}
\begin{tabular}{|c||ccccccccc|}
  \hline
 {(j,n)} terms  &  & 0 & 1 & 2 & 3 & 4 & 5 & 6 & 7   \\
  \hline
  \hline
    & 119.900 &         &        &         &      &      &  & &     \\
  0 &         & 315.978 & 0      &  0     & 0     & 0     & 0 & 0 & 0 \\
  1 &         & 4.213   & 12.257 & 5.943  & 0     &  0     & 0  & 0 & 0 \\
  2 &         & 0.034   & 0.163  & 0.238  & 0.077 & -0.019 & 0 & 0 & 0  \\
  3 &         & 0.000   & 0.001  & 0.003  & 0.003 &  0.001 &  -0.000  & 0.000 & 0\\
  \hline
 Price & 119.900 & 440.2125 & 452.546 & 458.73 & 458.81 & 458.792 & 458.792  &  458.792 & 458.792  \\
  \hline
\end{tabular}
\end{scriptsize}
\caption{Table containing the numerical values for the $(j,n)$-term in the series (\ref{BS_series_3}) for the option price ($S=4200, \, K=4000, \, r=1\%, \sigma=20\%, \, \tau=1Y$). The call price converges to a precision of $10^{-3}$ after summing only very few terms of the series.}
\label{fig:series1}
\end{table}

\section{Approximations and Hedge parameters}

\subsection{At-the-money price}
The asset is said to be at the money forward if $S=F$ (and therefore $k=0$). In this case, the series \eqref{BS_series_3} becomes
\begin{equation} \label{Brenner_series}
C_{ATMF} \, = \, \frac{S}{2} \sum\limits_{\substack{j\geq 0 \\ n\leq 2j}}  \, \frac{(-1)^n}{n!\Gamma(\frac{3}{2}+j-n)}  \, Z^{2j+1}
\end{equation}
This series is now a series of positive powers of $Z$. It starts for $n=0, m=1$ and goes as follows:
\begin{equation}
C_{ATMF} \, = \, \frac{S}{2}\frac{1}{\Gamma(\frac{3}{2})} \, Z \, + O(Z^3) 
\end{equation}
Recalling that $Z=\frac{\sigma\sqrt{\tau}}{\sqrt{2}}$ and that $\Gamma(\frac{3}{2}) = \frac{\sqrt{\pi}}{2}$ \cite{Abramowitz72}, we get
\begin{equation}
C_{ATMF} \, = \, \frac{S}{\sqrt{2\pi}} \, \sigma\sqrt{\tau} \, + O((\sigma \sqrt{\tau})^3)  
\end{equation}
As:
\begin{equation}
\frac{1}{\sqrt{2\pi}} \, \simeq \, 0.399 \, \simeq \, 0.4
\end{equation}
we thus recover the well-known Brenner-Subrahmanyam approximation \cite{BS94}:
\begin{equation}
C_{ATMF} \, \simeq \, 0.4 \, S \,  \sigma\sqrt{\tau} 
\end{equation}

\subsection{Greeks}

The hedge parameters, known in quantitative finance as the "Greeks", are derivatives of the option price with respect to time or to market parameters (spot price, risk-free rate, market volatility); they are used to measure the sensibilities of the option to those parameters and to construct appropriate hedging policies. 

The boundary \eqref{Speed} shows that the series \eqref{BS_series_3} converges normally to the call price $C$, and therefore the Greeks can be easily obtained by a term-by-term differentiation of this series.

\subsubsection{Delta}
By definition of $k$ we have $\frac{\partial k}{\partial S} = \frac{1}{S}$ and therefore, by differentiating \eqref{BS_series_3} with respect to $S$ and rearranging the terms:
\begin{equation}
\frac{\partial C}{\partial S} \, = \, \frac{1}{2} \, - \, \frac{1}{2}\frac{F}{S} \, \sum_{\substack{j \geq 1 \\ n \leq 2j-1}} \, Z^{2j-1} \, \frac{(-1)^{n+1}}{n!\Gamma(\frac{1}{2}+j-n)} \, \left( 1-\frac{k}{Z^2} \right)^{n}
\end{equation}
In the ATM-forward configuration ($S=F$, $k=0$) we are left with:
\begin{align}
\frac{\partial C}{\partial S} \, & = \, \frac{1}{2} \, - \, \frac{1}{2} \, \sum_{\substack{j \geq 1 \\ n \leq 2j-1}} \, Z^{2j-1} \, \frac{(-1)^{n+1}}{n!\Gamma(\frac{1}{2}+j-n)} \\
 & = \, \frac{1}{2} \, + \,  \frac{1}{2\sqrt{\pi}}Z \, + \, O(Z^3)
\end{align}

\subsubsection{Rho}
By definition of $F$ and $k$ we have $\frac{\partial F}{\partial r} = -\tau F$ and $\frac{\partial k}{\partial r} = \tau$ and therefore, by differentiating \eqref{BS_series_3} with respect to $r$:
\begin{equation}
\frac{\partial C}{\partial r} \, = \, \frac{1}{2}\tau F \left[ 1 - \,  \sum_{\substack{j \geq 0 \\ n \leq 2j}} \, Z^{2j-1} (Z^2-k+n) \, \frac{(-1)^{n}}{n!\Gamma(\frac{3}{2}+j-n)} \, \left( 1-\frac{k}{Z^2} \right)^{n-1} \right] 
\end{equation}
In the ATM-forward configuration ($S=F$, $k=0$) we are left with:
\begin{align}
\frac{\partial C}{\partial r} \, & = \, \frac{1}{2}\tau F \, \left[ 1  \, - \,  \sum_{\substack{j \geq 0 \\ n \leq 2j}} \, Z^{2j-1} (Z^2+n) \, \frac{(-1)^{n}}{n!\Gamma(\frac{3}{2}+j-n)} \right]   \\
 & = \, \frac{1}{2} \tau F \left[ 1 \,  - \,  \frac{1}{\sqrt{\pi}}Z \, + \, O(Z^2)  \right]
\end{align}

\subsubsection{Vega}
By definition of $Z$ we have $\frac{\partial Z}{\partial \sigma} = \sqrt{\frac{\tau}{2}}$ and therefore, by differentiating \eqref{BS_series_3} with respect to $\sigma$ :
\begin{equation}
\frac{\partial C}{\partial \sigma} \, = \, \frac{F}{2} \, \sqrt{\frac{\tau}{2}} \,  \sum_{\substack{j \geq 0 \\ n \leq 2j}} \, Z^{2(j-1)} \, \frac{(-1)^{n}}{n!\Gamma(\frac{3}{2}+j-n)} \, P_{j,n} \,  \left( 1-\frac{k}{Z^2} \right)^{n-1}
\end{equation}
where
\begin{equation}
P_{j,n} \, = \, (Z^2-k)(1+2j) \, + \, 2 nk
\end{equation}
In the ATM-forward configuration ($S=F$, $k=0$) then $P_{j,n} = (1+2j)Z^2$ and we are left with:
\begin{align}
\frac{\partial C}{\partial \sigma} \, & = \, \frac{S}{2}\sqrt{\frac{\tau}{2}} \, \sum_{\substack{j \geq 0 \\ n \leq 2j}} \, (1+2j) \, Z^{2j} \, \frac{(-1)^{n}}{n!\Gamma(\frac{3}{2}+j-n)} \\
 & = \, S\sqrt{\frac{\tau}{2}} \left[ \frac{1}{\sqrt{\pi}} \, - \, \frac{1}{4\sqrt{\pi}}Z^2 \, + \, O(Z^4) \right]
\end{align}

\subsubsection{Theta}
By definition of $Z$, $F$ and $k$ we have $\frac{\partial Z}{\partial \tau} = \frac{\sigma}{2\sqrt{2\tau}}$, $\frac{\partial F}{\partial \tau} = -rF$, $\frac{\partial k}{\partial \tau} = r$ and therefore, by differentiating \eqref{BS_series_3} with respect to $\tau$ :
\begin{equation}
\frac{\partial C}{\partial \tau} \, = \, \frac{1}{2} r F \, + \, \sigma^2 F \,   \sum_{\substack{j \geq 0 \\ n \leq 2j}} \, Z^{2j-3} \, \frac{(-1)^{n}}{n!\Gamma(\frac{3}{2}+j-n)} \, Q_{j,n} \,  \left( 1-\frac{k}{Z^2} \right)^{n-1}
\end{equation}
where
\begin{equation}
Q_{j,n} \, = \, \frac{1}{8} \, \left[ (1+2j-2r\tau)( Z^2 - k ) + 2n (k-r\tau)   \right]
\end{equation}
In the ATM-forward configuration ($S=F$, $k=0$) then $Q_{j,n} = \frac{1}{8} \, [ (1+2j-2r\tau) Z^2  - 2 n r \tau  ] $ and we are left with:
\begin{align}
\frac{\partial C}{\partial \tau} \, & = \, \frac{1}{2} r F \, + \, \sigma^2 F \,   \sum_{\substack{j \geq 0 \\ n \leq 2j}} \, Z^{2j-3} \, \frac{(-1)^{n} ((1+2j-2r\tau)Z^2 - 2 n r \tau)}{8n!\Gamma(\frac{3}{2}+j-n)} \,  \\
 & = \, \frac{1}{2} r F \, + \, \sigma^2 F \, \left[ \frac{1-r\tau}{4 \sqrt{\pi}} \frac{1}{Z} \,  + \, O(Z) \right]
\end{align}

\section{Concluding remarks}

In this paper, we have provided two new formulas for European options prices in the Black-Scholes model, under the form of rapidly converging series whose terms are straightforward to calculate: formulas \eqref{BS_series} and \eqref{BS_series_3}. Both are simple, but \eqref{BS_series_3} is more tractable for practical applications thanks to his factorized structure in terms of odd powers of $Z$; as this series is uniformly convergent, we were also able to differentiate it term by term, and obtain series expansions for the hedge parameters. Besides their novelty and own mathematical interest, these series can be very easily used in practice and allow to control the calculations to an arbitrary order of precision.  

Let us mention that the multidimensional residue technique we have used to derive these results actually applies to a much wider class of option pricing models, namely all models driven by a space-time fractional generalization of the diffusion equation \eqref{heat_equation}. This class includes not only Black-Scholes but also Finite Moment Log Stable (FMLS) or time-fractional models (see more details in \cite{ACK18}); this is however a particular feature of the Black-Scholes model to be able to transform the residues series into a $Z$-ordered series, and it would not be possible if the options prices were driven by a generic fractional diffusion (only a residues series similar to \eqref{BS_series} could be written down). 

Beyond quantitative finance, the multidimensional residue technique is suitable for the analysis of a PDE as soon as its Green function and its initial or terminal conditions can be expressed under the form of a Mellin-Barnes integral. As illustrated by this paper, this method allows to obtain very precise asymptotic results, without having to solve the PDE explicitly or to implement numerical schemes.

\newpage
\appendix
\section{APPENDIX: Mellin transforms and residues}

We briefly present here some of the concepts used in the paper. The theory of the one-dimensional Mellin transform is explained in full detail in \cite{Flajolet95}. An introduction to multidimensional complex analysis can be found in the classic textbook \cite{Griffiths78}, and applications to the specific case of Mellin-Barnes integrals is developped in \cite{Tsikh94,Tsikh96,Tsikh97}.

\subsection{One-dimensional Mellin transforms}

1. The Mellin transform of a locally continuous function $f$ defined on $\mathbb{R}^+$ is the function $f^*$ defined by
\begin{equation}\label{Mellin_def}
f^*(s) \, := \, \int\limits_0^\infty \, f(x) \, x^{s-1} \, \id x
\end{equation}
The region of convergence $\{ \alpha < Re (s) < \beta \}$ into which the integral \eqref{Mellin_def} converges is often called the fundamental strip of the transform, and sometimes denoted $ < \alpha , \beta  > $.
\newline
\noindent 2. The Mellin transform of the exponential function is, by definition, the Euler Gamma function:
\begin{equation}
\Gamma(s) \, = \, \int\limits_0^\infty \, e^{-x} \, x^{s-1} \, \id x
\end{equation}
with strip of convergence $\{ Re(s) > 0 \}$. Outside of this strip, it can be analytically continued, expect at every negative $s=-n$ integer where it admits the singular behavior
\begin{equation}\label{sing_Gamma}
\Gamma(s) \, \underset{s\rightarrow -n}{\sim} \, \frac{(-1)^n}{n!}\frac{1}{s+n} \hspace*{1cm} n\in\mathbb{N}
\end{equation}
\newline
\noindent 3. The inversion of the Mellin transform is performed via an integral along any vertical line in the strip of convergence:
\begin{equation}\label{inversion}
f(x) \, = \, \int\limits_{c-i\infty}^{c+i\infty} \, f^*(s) \, x^{-s} \, \frac{\id s}{2i\pi} \hspace*{1cm} c\in ( \alpha, \beta )
\end{equation}
and notably for the exponential function one gets the so-called \textit{Cahen-Mellin integral}:
\begin{equation}\label{Cahen}
e^{-x} \, = \, \int\limits_{c-i\infty}^{c+i\infty} \, \Gamma(s) \, x^{-s} \, \frac{\id s}{2i\pi} \hspace*{1cm} c>0
\end{equation}
\newline
\noindent 4. When $f^*(s)$ is a ratio of products of Gamma functions of linear arguments:
\begin{equation}
f^*(s) \, = \, \frac{\Gamma(a_1 s + b_1) \dots \Gamma(a_n s + b_n)}{\Gamma(c_1 s + d_1) \dots \Gamma(c_m s + d_m)}
\end{equation}
then one speaks of a \textit{Mellin-Barnes integral}, whose \textit{characteristic quantity} is defined to be
\begin{equation}
\Delta \, = \, \sum\limits_{k=1}^n \, a_k \, - \, \sum\limits_{j=1}^m \, c_j
\end{equation}
$\Delta$ governs the behavior of $f^*(s)$ when $|s|\rightarrow \infty$ and thus the possibility of computing \eqref{inversion} by summing the residues of the analytic continuation of $f^*(s)$ right or left of the convergence strip:
\begin{equation}
\left\{
\begin{aligned}
& \Delta < 0 \hspace*{1cm} f(x) \, = \, -\sum\limits_{Re(s_N) > \beta} \, \res_{S_N} f^*(s)x^{-s}  \\
& \Delta > 0 \hspace*{1cm} f(x) \, = \, \sum\limits_{Re(s_N) < \alpha} \, \res_{S_N} f^*(s)x^{-s}
\end{aligned}
\right.
\end{equation}
For instance, in the case of the Cahen-Mellin integral one has $\Delta = 1$ and therefore:
\begin{equation}
e^{-x} \, = \, \sum\limits_{Re(s_n)<0} \res_{s_n} \Gamma(s) \, x^{-s} \, = \, \sum\limits_{n=0}^{\infty} \, \frac{(-1)^n}{n!}x^n
\end{equation}
as expected from the usual Taylor series of the exponential function.

\subsection{Multidimensional Mellin transforms}

1. Let the $\underline{a}_k$, $\underline{c}_j$, be vectors in $\mathbb{C}^2$,and the $b_k$, $d_j$ be complex numbers. Let $\underline{t}:=\begin{bmatrix} t_1 \\ t_2 \end{bmatrix}$ and $\underline{c}:=\begin{bmatrix} c_1 \\ c_2 \end{bmatrix}$ in $\mathbb{C}^2$ and "." represent the euclidean scalar product. We speak of a Mellin-Barnes integral in $\mathbb{C}^2$ when one deals with an integral of the type
\begin{equation}
\int\limits_{\underline{c}+i\mathbb{R}^2} \, \omega
\end{equation}
where $\omega$ is a complex differential 2-form who reads
\begin{equation}
\omega \, = \, \frac{\Gamma(\underline{a}_1.\underline{t}_1 + b_1) \dots \Gamma(\underline{a}_n.\underline{t}_n + b_n)}{\Gamma(\underline{c}_1.\underline{t}_1 + d_1) \dots \Gamma(\underline{c}_m.\underline{t}_m + b_m)} \, x^{-t_1} \, y^{-t_2} \, \frac{\id t_1}{2i\pi} \wedge \frac{\id t_2}{2i\pi} \hspace*{1cm} \, x,y \in\mathbb{R}
\end{equation}
The singular sets induced by the singularities of the Gamma functions
\begin{equation}
D_k \, := \, \{ \underline{t}\in\mathbb{C}^2 \, , \, \underline{a}_k.\underline{t}_k + b_k = -n_k \, , \, n_k \in\mathbb{N}   \} \,\,\,\, \, k=0 \dots n
\end{equation}
are called the \textit{divisors} of $\omega$. The \textit{characteristic vector} of $\omega$ is defined to be
\begin{equation}\label{Delta_n}
\Delta \, = \, \sum\limits_{k=1}^n \underline{a}_k \, - \, \sum\limits_{j=1}^m \underline{c}_j
\end{equation}
and the \textit{admissible half-plane}:
\begin{equation}
\Pi_\Delta \, := \, \{ \underline{t}\in\mathbb{C}^2 \, , \, \Delta . \underline{t} \, < \, \Delta . \underline{c}  \}
\end{equation}
\newline
\noindent 2. Let the $\rho_k$ in $\mathbb{R}$, the $h_k:\mathbb{C}\rightarrow\mathbb{C}$ be linear aplications and $\Pi_k$ be a subset of $\mathbb{C}^2$ of the type
\begin{equation}\label{Pik}
\Pi_k \, := \, \{ \underline{t}\in\mathbb{C}^2, \, Re(h_k(t_k)) \, < \, \rho_k \}
\end{equation}
A \textit{cone} in $\mathbb{C}^2$ is a cartesian product
\begin{equation}
\Pi \, = \, \Pi_1 \times \Pi_2
\end{equation}
where $\Pi_1$ and $\Pi_2$ are of the type \eqref{Pik}. Its \textit{faces} $\varphi_k$ are
\begin{equation}
\varphi_k \, := \, \partial \Pi_k \hspace*{1cm} k=1,2
\end{equation}
and its \textit{distinguished boundary}, or \textit{vertex} is
\begin{equation}
\partial_0 \, \Pi \, := \, \varphi_1 \, \cap \, \varphi_2
\end{equation}
\newline
3. Let $1<n_0<n$. We group the divisors $D=\cup_{k=0}^n \, D_k$ of the complex differential form $\omega$ into two sub-families
\begin{equation}
D_1 \, := \, \cup_{k=1}^{n_0} \, D_k \,\,\, \,\,\, D_2 \, := \, \cup_{k=n_0+1}^{n} \, D_k  \hspace*{1cm}  D \, = \, D_1\cup D_2
\end{equation}
We say that a cone $\Pi\subset\mathbb{C}^2$ is \textit{compatible} with the divisors family $D$ if:
\begin{enumerate}
\item[-] \, Its distinguished boundary is $\underline{c}$;
\item[-] \, Every divisor $D_1$ and $D_2$ intersect at most one of his faces:
\begin{equation}
D_k \, \cap \, \varphi_k \, = \, \emptyset \hspace*{1cm} k=1,2
\end{equation}
\end{enumerate}

\noindent 4. Residue theorem for multidimensional Mellin-Barnes integral \cite{Tsikh94,Tsikh97}: If $\Delta \neq 0$ and if $\Pi\subset\Pi_\Delta$ is a compatible cone located into the admissible half-plane, then
\begin{equation}\label{Tsikh_residue}
\int\limits_{\underline{c}+i\mathbb{R}^2} \, \omega \, = \, \sum\limits_{\underline{t}\in\Pi\cap (D_1\cap D_2)} \res_{\underline{t}} \, \omega
\end{equation}
and the series converges absolutely. The residues are to be understood as the "natural" generalization of the Cauchy residue, that is:
\begin{equation}
\left. \res_{\underline{0}} \, \left[ f(t_1,t_2) \, \frac{\id t_1}{2i\pi t_1^{n_1}} \wedge \frac{\id t_2}{2i\pi t_1^{n_2}}  \right] \, = \, \frac{1}{(n_1-1)!(n_2-1)!}\frac{\partial ^{n_1+n_2-2} f}{\partial t_1^{n_1-1}   \partial t_2^{n_2-1} }\right\vert_{\substack{t_1=0\\t_2=0}}
\end{equation}

\end{document}